
\documentclass[letterpaper, 10 pt, conference]{ieeeconf}  

\IEEEoverridecommandlockouts                              
\overrideIEEEmargins
\usepackage{amsmath,amssymb,amsfonts,bm, mathtools, dsfont}
\usepackage{float}
\usepackage{color}
\usepackage{graphics} 
\usepackage{epsfig} 
\usepackage{mathptmx} 
\usepackage{times} 
\usepackage{amsmath} 
\usepackage{amssymb}  
\usepackage{textcomp}


\usepackage{amsthm}

\usepackage{booktabs}
\usepackage{float}
\usepackage{multirow}
\usepackage{tikz}
\usepackage{cleveref}
\usepackage{cite}
\usepackage{amsmath}
\bibliographystyle{IEEEtran}

\newtheoremstyle{boldtheorem}
  {3pt}   
  {3pt}   
  {}      
  {}      
  {\bfseries}  
  {.}     
  {1em}   
  {}      

\newtheorem{theorem}{Theorem}
\newtheorem{lemma}{Lemma}
\newtheorem{definition}{Definition}
\allowdisplaybreaks[1]

\title{\LARGE \bf
Data-Driven Reachability with Scenario Optimization \\
and the Holdout Method
}

\author{Elizabeth Dietrich, Rosalyn Devonport, Stephen Tu, and Murat Arcak 
\thanks{E. Dietrich and M. Arcak are with the University of California, Berkeley, USA. Email: {\tt\small \{eadietri, arcak\}@berkeley.edu}. R. Devonport is with the University of New Mexico, Alburquerque, USA. Email: {\tt\small devonport@unm.edu}. S. Tu is with the University of Southern California, USA. Email: {\tt\small stephen.tu@usc.edu}.}
}

\usepackage{tikz} 
\newcommand\copyrighttext{%
	\footnotesize \copyright 2025 IEEE. Personal use of this material is permitted. Permission from IEEE must be obtained for all other uses, in any current or future media, including reprinting/republishing this material for advertising or promotional purposes, creating new collective works, for resale or redistribution to servers or lists, or reuse of any copyrighted component of this work in other works.}
\newcommand\copyrightnotice{%
	\begin{tikzpicture}[remember picture,overlay]
		\node[anchor=south,yshift=10pt] at (current page.south) {\fbox{\parbox{\dimexpr\textwidth-\fboxsep-\fboxrule\relax}{\copyrighttext}}};
	\end{tikzpicture}%
}

\begin{document}

\maketitle
\thispagestyle{empty}
\pagestyle{empty}


\begin{abstract}
Reachability analysis is an important method in providing safety guarantees for systems with unknown or uncertain dynamics. Due to the computational intractability of exact reachability analysis for general nonlinear, high-dimensional systems, recent work has focused on the use of probabilistic methods for computing approximate reachable sets. In this work, we advocate for the use of a general purpose, practical, and sharp method for data-driven reachability: \emph{the holdout method}. Despite the simplicity of the holdout method, we show---on several numerical examples including scenario-based reach tubes---that the resulting probabilistic bounds are substantially sharper and require fewer samples than existing methods for data-driven reachability. Furthermore, we complement our work with a discussion on the necessity of probabilistic reachability bounds. We argue that any method that attempts to \emph{de-randomize} the bounds, by converting the guarantees to hold deterministically, requires (a) an exponential in state-dimension amount of samples to achieve non-vacuous guarantees, and (b) extra assumptions on the dynamics.
\end{abstract}

\copyrightnotice

\vspace*{-2mm}
\section{Introduction}
Reachability analysis plays an integral role in analyzing system safety by determining the set of states a system can transition to in a finite time horizon \cite{althoff2010reachability}. Many set-based techniques \cite{10.1007/978-3-540-24743-2_32, annurev-control-060117-104941, annurev-control-071420-081941, 10.1007/978-3-540-31954-2_19} have been developed to compute under- or over-approximated reachable sets to capture all possible trajectories and present guarantees on state-space safety. However, when there is limited information on a system's dynamics, through only simulation or experimentation, we must estimate reachable sets and derive probabilistic guarantees of correctness directly from data. To provide probability measures for learned reachable sets, many approaches require a minimum number of samples \cite{pmlr-v120-devonport20a, 9147918, 8814354, 10138798} or assumptions on the system dynamics \cite{Lew2020SamplingbasedRA, 10383270, 10068731}, and are often computationally intensive \cite{10.1007/978-3-030-99524-9_17, 10.1109/CDC45484.2021.9682860}. We advocate for the use of a general purpose method, \textit{the holdout method}, that circumvents these limitations and improves the sample complexity and sharpness of existing probabilistic reachability bounds. 

The holdout method, or cross-validation, is one of the simplest and most widely used methods for estimating prediction error and providing statistical guarantees \cite{stone1974cross, Kohavi1995, hastie2009elements, tempobook}. While this method has been utilized across disciplines for decades, its biggest limitation is the availability of data. The holdout method requires a validation set that is only used to assess the performance of the prediction model. If data is scarce, this can lead to test-set contamination or a small test-set size. However, in the context of data-driven reachability analysis, especially simulation-based approaches, generating new data sets avoids these issues and can be computationally efficient.  

There are numerous works investigating other statistical learning techniques for data-driven reachability analysis, including approaches such as scenario optimization \cite{pmlr-v242-dietrich24a, pmlr-v242-lin24a, 8882241}, conformal prediction \cite{10384213, pmlr-v204-tebjou23a, 10383723}, and Gaussian processes \cite{10383270, 7039601}. While many of these works rely on a-priori sample complexities, \cite{pmlr-v242-dietrich24a} calculates a-posteriori probability bounds using a wait-and-judge perspective. Therefore, we compare the a-posteriori bounds we achieve with the holdout method directly to the wait-and-judge perspective to demonstrate the stronger and more computationally-efficient guarantees that we are able to obtain. Further, in contrast to existing approaches, the holdout method provides a general purpose framework, amenable to any reachable set estimator, and it does not require any prior knowledge about system dynamics or sample complexity. 

To conclude our work, we include a discussion on the necessity of probabilistic bounds 
in the context of data-driven reachability analysis. We argue that any attempt to \emph{de-randomize} a probabilistic bound---that is, convert a probabilistic bound into a deterministic bound via enlargement of the reachable set---must not only make extra assumptions on the dynamics (e.g., Lipschitz bounds on the transition function), but also requires an exponential in state-dimension number of scenarios to obtain non-vacuous guarantees.


\section{Problem Statement}
\label{prelim}

\subsection{Forward Reachable Sets.} A forward reachable set is defined as $R = \{\Phi(t_1;t_0, x_0, d) : x_0 \in X_0, d \in D\}$ where $X_0 \subseteq \mathbb{R}^{n_x}$ is the set of initial states, $D$ is the set of disturbance signals $d: [t_0, t_1] \rightarrow \mathbb{R}^{n_d}$, and $\Phi : X_0 \times D \rightarrow \mathbb{R}^{n_x}$ is the state transition function. This is the set of all states to which the system can transition to at time $t_1$ from states $X_0$ at time $t_0$ subject to disturbances in $D$. Since we cannot compute exact reachable sets, we aim to compute an approximation, $\hat{R}$, that is close to the true reachable set in a probabilistic sense.

To compute such an approximation, we first endow both $X_0$ and $D$ with probability distributions $\mu_{X_0}$ and $\mu_D$, respectively. Let $\Delta$ be the resulting probability space from which we draw samples $\delta^{(i)} =\Phi(t_1;t_0, x_{0i}, d_i), i = 1, \dotsc, N $ where $x_{01}, \dotsc, x_{0N} \overset{i.i.d}{\sim} \mu_{X_0}$, $d_1, \dotsc, d_N \overset{i.i.d}{\sim} \mu_{D}$. We compute a reachable set estimate in the form of a sublevel set of a parameterized function 
\begin{equation}
\label{eq:reachability}
\begin{aligned}
\hat{R}(\theta) = \{x \in \mathbb{R}^{n_x} : g(x, \theta) \leq 0\}
\end{aligned}
\end{equation} 
where $g : \mathbb{R}^{n_x} \times \mathbb{R}^{n_{\theta}} \rightarrow \mathbb{R}$. In~\eqref{eq:reachability}, $\theta$ represents a parameterization of the class of admissible reachable set estimators: to fix a value of $\theta$ is to choose an estimator. 

\subsection{Violation Probability.}  
Given the samples $\delta^{(1)}, \dots, \delta^{(N)}$ and a desired confidence parameter $\beta$, we wish to find the minimum-volume reachable set that contains the samples and satisfies the probabilistic guarantee $\mathbf{P} \{ V(\hat{R}(\theta)) > \epsilon\} \leq \beta$. The violation probability $V(\hat{R}(\theta))$, or true error $e$, of the reachable set estimate, $\hat{R}(\theta)$, is defined as the probability that an unseen scenario will violate the reachable set: 
    \begin{equation}
        e \equiv V(\hat{R}(\theta)) \equiv  \mathbf{P}_{\delta \sim \Delta}\{ g(\delta, \theta) > 0\} \equiv \mathbf{P}_{\delta \sim \Delta}\{ \delta \not\in \hat{R}(\theta) \}.
        \label{eq:true}
    \end{equation}

In other words,  if $V(\hat{R}(\theta)) \leq \epsilon$, then our reachable set estimate is robust against constraint violation at level $\epsilon$. Since the true error is not an observable quantity, we utilize the empirical error. Given a new sample set of size $M$ drawn from $\Delta$, the empirical error $\hat{e}$, or test error, is the observed number of scenarios that violate the reachable set estimate:
\begin{align}
    \hat{e} \equiv \hat{V}(\hat{R}(\theta)) &\equiv  \mathbf{P}_{\delta_s \sim \Delta}\{ \delta_s \not\in \hat{R}(\theta) \} \\
    &\equiv \frac{1}{M}\sum^{M}_{i=1 }\mathds{1}_{\hat{R}(\theta)}(\delta_s^{(i)})
    \label{eq:empirical}
\end{align}
where
\begin{equation}
    \mathds{1}_{\hat{R}}(\theta)(\delta_s^{(i)}) = \Bigg\{ \begin{array}{ll}
      1 & \text{if } \delta_s^{(i)} \notin  \hat{R}(\theta)\\
      0 & \text{else}. \\
\end{array} 
\label{classifier}
\end{equation}
Given 
$\hat{e}$ and $\beta$, we will obtain an appropriate $\epsilon$ to satisfy the bound $\mathbf{P} \{ V(\hat{R}(\theta)) > \epsilon\} \leq \beta$ in Section \ref{holdout}.

\subsection{Nonconvex Scenario Reachability Analysis}
To calculate reachable set estimates of the form \eqref{eq:reachability}, we utilize scenario optimization. Scenario optimization is an approach to solving chance-constrained optimization problems by solving a non-probabilistic relaxation of the original problem \cite{dembo1991scenario}. We fix a functional ${\rm Vol}: \mathbb{R}^{n_{\theta}} \rightarrow \mathbb{R}_{\geq 0}$ that acts as a proxy for the volume of $R(\theta)$. This motivates the following scenario program:
\begin{equation}
\label{eq:volprox}
\begin{aligned}
& \underset{\theta}{\text{minimize}}
& & \rm{Vol}(\theta) \\
& \text{subject to}
& & g(\delta^{(i)}, \theta) \leq 0, i = 1, \dotsc, N\\
& & & \theta \in \mathbb{R}^{n_{\theta}}.
\end{aligned}
\end{equation} 

The solution to (\ref{eq:volprox}) is the minimum-volume set that contains sample points $\delta^{(1)}, \dotsc, \delta^{(N)}$. In \cite{pmlr-v242-dietrich24a}, we present two methods for estimating these nonconvex reachable sets, including the sum of radial basis functions, as demonstrated in Section \ref{applications}.

\subsection{Wait-and-Judge}
Given a scenario program of the form \eqref{eq:volprox}, one approach for computing probabilistic bounds is \textit{wait-and-judge} \cite{pmlr-v242-dietrich24a}. The wait-and-judge perspective, combined with nonconvex scenario optimization \cite{8299432},
determines $\epsilon$ a-posteriori as a function of support scenarios. A support scenario is any scenario whose removal changes the solution to \eqref{eq:volprox}. This approach requires re-solving the scenario program upon removal of each individual scenario. Therefore, this approach can be computationally intensive for systems with high dimensionality or complexity. We refer readers to \cite{pmlr-v242-dietrich24a} for more details on computing $\epsilon$. Since we calculate the probability measure of our reachable set after solving the optimization problem, the nonconvex scenario approach is amenable to alternative statistical analysis techniques, such as the holdout method, as discussed in Section \ref{holdout}.

\subsection{Binomial Tail Inversion.}
\label{bti}
To calculate probabilistic bounds for \eqref{eq:volprox}, we employ the holdout method given a sample set of size $M$. We would like to calculate the probability of observing at most $k$ reachable set violations out of $M$ samples. From a statistical perspective, this can be viewed as computing a tail bound for a binomial distribution. We use a binomial tail inversion to obtain the largest true error $e$ such that the probability of $k$ or more samples violating the reachable set is at least $\beta$ \cite{langford2005tutorial}.
This calculates the bound on the true error given the empirical error $\hat{e}$, or the empirical count of boundary violations $\hat{k}$, and confidence $\beta$.
\begin{definition}
\label{def:binomial_tail_inversion}
(Binomial Tail Inversion)
For $\hat{k}$ violations out of $M$ newly sampled scenarios and all $\beta \in (0, 1]$:
    \begin{equation}
        \overline{\text{Bin}}(\hat{k}, M, \beta) = \max_{e}\Bigl\{ e : \text{Bin}\Bigl(\hat{k}, M, e \Bigr) \geq \beta \Bigr\}, \,\,\text{where}
        \label{binominversion}
    \end{equation}
    \begin{equation}
        \text{Bin}\Bigl(\hat{k}, M, e \Bigr) =  \sum_{j=0}^{\hat{k}} \binom Mj e^j (1-e)^{M-j}.
    \end{equation}
    \label{def1}
\end{definition}

\section{The Holdout Method}
\label{holdout}
We present a technique that utilizes the \textit{holdout method} to a-posteriori evaluate the robustness level of the scenario solution to \eqref{eq:volprox} by obtaining an empirical estimate of the accuracy of a given reachable set approximation. This well-known method offers a modular technique for computing probabilistic bounds. In particular, we demonstrate the use of the holdout method in nonconvex scenario reachability analysis; we tighten the probabilistic bounds that were first introduced in \cite{pmlr-v242-dietrich24a}, as shown in Fig. \ref{overview}.  However, it is important to mention that the holdout method works for \textit{any} parameterization of a reachable set estimator, including those that arise from neural networks.

\begin{figure*}[h]
   \centering
    \resizebox{0.9\linewidth}{!}{
        \input{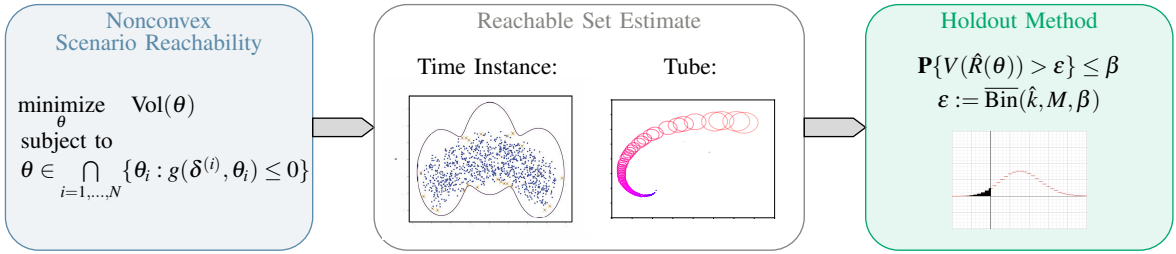}
    } 
   \caption{Using nonconvex scenario reachability analysis, we calculate reachable set estimates, $\hat{R}(\theta)$, in the form of a constrained optimization problem. Specifically, we construct $\hat{R}(\theta)$ from a finite set of radial basis functions. We illustrate the output of this method through reach sets of a single time instance and reach tubes. Finally, we utilize the holdout method, and a binomial tail inversion, to calculate probabilistic bounds for $\hat{R}(\theta)$.}
   \label{overview}
\end{figure*}
The holdout method is a foundational paradigm of data-driven methodologies. It 
employs 
a test set of $M$ fresh scenarios 
to provide a good risk estimate for a given model \cite{hardtrecht2022patterns}. Therefore, we draw a \textit{new} set of samples $\delta^{(i)}_s =\Phi(t_1;t_0, x_{0i}^s, d_i^s), i = 1, \dotsc, M $ where $x_{01}^s, \dotsc, x_{0M}^s \overset{i.i.d}{\sim} \mu_{X_0}$, $d_1^s, \dotsc, d_M^s \overset{i.i.d}{\sim} \mu_{D}$, and test the accuracy of our estimate $\hat{R}(\theta)$ on $\delta_s^{(1)}, \dotsc, \delta_s^{(M)}$. 
Given $\hat{k}$ reachable set violations out of $M$ samples, or the empirical error \eqref{eq:empirical}, we are able to use a binomial tail bound, as introduced in Section \ref{bti}, to calculate a bound on the true error \eqref{eq:true} of the reachable set. We formulate this in the following theorem:

\begin{theorem} (\textit{Adapted from \cite[Thm. 3.3]{langford2005tutorial}})
\label{thm:holdout}
Given any $\beta \in (0, 1)$, empirical count of boundary violations $\hat{k}$, and size of test set $M$, the following probability bound holds for the reachable set estimate $\hat{R}(\theta)$: 
\begin{equation}
\begin{aligned}
\label{equ:probviolate}
 {\mathbf{P}} \{ V(\hat{R}(\theta)) > \overline{\text{Bin}}(\hat{k}, M, \beta) \} \leq \beta.
\end{aligned}
\end{equation} 
\end{theorem}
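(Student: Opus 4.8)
The plan is to exploit the defining feature of the holdout method---that the estimate $\hat{R}(\theta)$ is fixed \emph{before} the fresh test scenarios are drawn---to reduce \eqref{equ:probviolate} to a one-sided binomial confidence bound. First I would condition on the training draw, so that $\theta$, and hence the true error $e := V(\hat{R}(\theta))$, is a deterministic quantity. Because the test scenarios $\delta_s^{(1)},\dots,\delta_s^{(M)}$ are drawn i.i.d.\ from $\Delta$ independently of $\theta$, each indicator $\mathds{1}_{\hat{R}(\theta)}(\delta_s^{(i)})$ is an independent Bernoulli trial whose success probability is exactly $e$ by \eqref{eq:true}. Consequently the \emph{random} violation count $\hat{k}$ is distributed as $\mathrm{Binomial}(M,e)$, and $\text{Bin}(k,M,e)=\mathbf{P}\{\hat{k}\le k\}$ is precisely the CDF of $\hat{k}$ evaluated at $k$; the probability in \eqref{equ:probviolate} is understood over this test draw.

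Next I would record the monotonicity that makes the inversion \eqref{binominversion} well-posed. For fixed $k<M$ the map $e\mapsto\text{Bin}(k,M,e)$ is non-increasing---this follows from the stochastic dominance of $\mathrm{Binomial}(M,e_1)$ over $\mathrm{Binomial}(M,e_2)$ when $e_2\le e_1$, or directly from the identity $\tfrac{d}{de}\text{Bin}(k,M,e)=-M\binom{M-1}{k}e^k(1-e)^{M-1-k}\le 0$. Since $\text{Bin}(k,M,0)=1\ge\beta$ and $\text{Bin}(\cdot)$ is continuous, the superlevel set $\{e:\text{Bin}(k,M,e)\ge\beta\}$ is a closed interval $[0,\overline{\text{Bin}}(k,M,\beta)]$, so the maximum in \eqref{binominversion} is attained. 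The payoff is an exact equivalence: for the fixed value $e$ and any realization $\hat{k}=k$, we have $e>\overline{\text{Bin}}(k,M,\beta)$ if and only if $\text{Bin}(k,M,e)<\beta$. Hence the event in \eqref{equ:probviolate} is identical to $\{\text{Bin}(\hat{k},M,e)<\beta\}$.

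It then remains to bound $\mathbf{P}\{\text{Bin}(\hat{k},M,e)<\beta\}$, which I would handle with a discrete probability-integral-transform argument. Writing $F(k):=\text{Bin}(k,M,e)=\mathbf{P}\{\hat{k}\le k\}$ for the CDF of $\hat{k}$, set $k^\star:=\max\{k:F(k)<\beta\}$ (if no such $k$ exists the event is empty and the probability is $0$). Because $F$ is non-decreasing, $F(\hat{k})<\beta$ holds exactly when $\hat{k}\le k^\star$, so $\mathbf{P}\{F(\hat{k})<\beta\}=F(k^\star)<\beta$. Chaining this with the equivalence from the previous step yields $\mathbf{P}\{V(\hat{R}(\theta))>\overline{\text{Bin}}(\hat{k},M,\beta)\}<\beta\le\beta$, which is the claim; taking an outer expectation over the training draw removes the conditioning, since the conditional bound holds uniformly in $e$.

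The main obstacle is not any single calculation but the discreteness of the binomial: unlike the continuous case, $F(\hat{k})$ is not uniform on $[0,1]$, so one cannot simply invoke the probability-integral transform to obtain equality with $\beta$. The care lies in selecting the threshold index $k^\star$ and using the monotonicity of the binomial tail to convert the event on $e$ into a clean event on $\hat{k}$; this is what delivers the conservative inequality $\le\beta$. A secondary point worth stating explicitly is the holdout independence in the first step, since it is precisely the fact that $\hat{R}(\theta)$ does not depend on the test scenarios that makes $\hat{k}$ \emph{exactly} $\mathrm{Binomial}(M,e)$ rather than merely stochastically comparable to it.
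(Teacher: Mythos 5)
Your proof is correct and complete: the paper itself gives no proof, deferring entirely to the cited result of Langford (Thm.~3.3), and your argument---conditioning on the training draw so that $\hat{k}\sim\mathrm{Binomial}(M,e)$, using monotonicity of $e\mapsto\mathrm{Bin}(k,M,e)$ to convert the event $\{e>\overline{\mathrm{Bin}}(\hat{k},M,\beta)\}$ into $\{F(\hat{k})<\beta\}$, and bounding that via the discrete probability-integral transform---is exactly the standard proof of that cited theorem. Nothing is missing; in particular you correctly flag the two points the paper only remarks on informally, namely that the probability is over the holdout draw with $\hat{R}(\theta)$ fixed and that the marginal bound follows by the tower property.
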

It is important to note that the probability above is taken with respect to the $M$ holdout samples $\delta_s^{(1)}, \dots, \delta_s^{(M)}$, with the estimated reachable set $\hat{R}(\theta)$ held fixed. This bound is, virtually, perfectly tight; the bound on the true error of our reachable set is violated exactly a $\beta$ portion of the time \cite{langford2005tutorial}. In Section \ref{applications}, we will demonstrate the efficiency of this method when collecting samples is computationally cheap. 

\textbf{Computation.}
To calculate the binomial tail inversion (cf.~\Cref{def:binomial_tail_inversion}), we use the SciPy Python library to solve an optimization problem with the sequence's cumulative distribution function (CDF) as a constraint, as seen in \eqref{binominversion}. 
Since the binomial tail inversion reduces to inverting a one-dimensional function, it can be computed efficiently using standard optimization routines, such as Quasi-Newton methods. 
Additionally, one can exploit the monotonically decreasing
property of the function $e \mapsto \mathrm{Bin}(k, M, e)$ on the interval $[0, 1]$ and utilize a simple bisection method.

\textbf{Order-wise scaling of binomial tail inversion.}
In general, $\overline{\text{Bin}}(\hat{k}, M, \beta)$ does not have a closed-form solution. However, we can obtain an order-wise scaling to illustrate the dependence on $\hat{k}$, $M$, and $\beta$.
First, suppose that no violations were observed on our holdout dataset ($\hat{k}=0$). 
Then, we have the following bound $\overline{\text{Bin}}(0, M, \beta) \leq \frac{\log(1/\beta)}{M}$ \cite[Corollary 3.4]{langford2005tutorial}. This illustrates a $1/M$ dependence (known as a ``fast-rate'' in statistics) in the number of holdout samples $M$, in addition to a logarithmic dependence, $\log(\frac{1}{\beta} )$, on the failure probability $\beta$. In the general case, when $\hat{k} > 0$, we have the looser scaling
\begin{equation}
    \overline{\text{Bin}}(\hat{k}, M, \beta) \leq \frac{\hat{k}}{M} + O\Bigl(\sqrt{\frac{\log(\frac{1}{\beta})}{M}}\Bigr),
\end{equation}
which is the typical scaling in $M$ predicted by the Central Limit Theorem.

\textbf{Zero violation reachable sets.}
The holdout method, compared to approaches, such as wait-and-judge (which do not require a separate holdout dataset), is not able to guarantee zero violations on the complete dataset. However, we show in \Cref{applications}, that while the wait-and-judge method computes a reachable set estimate with zero dataset violations, the final violation probability, $V(\hat{R}(\theta))$, is substantially more conservative than the holdout method.

\textbf{Marginal probability over training and holdout data.}
\Cref{thm:holdout} provides a bound over the probability of the $M$ holdout samples $\{\delta_s^{(i)}\}_{i=1}^{M}$.
On the other hand, the reachable set $\hat{R}(\theta)$ is calculated as a function of the $N$ training samples $\{\delta^{(i)}\}_{i=1}^{N}$.
It is also possible to obtain a bound which holds over the probability of all $N+M$ data points, which is more comparable to the guarantees provided by scenario optimization.
In fact, the same bound from \Cref{thm:holdout} holds by the tower property:
\begin{align*}
    \mathbf{P}_{(\{\delta^{(i)}\}_{i=1}^{N},\{\delta_s^{(i)}\}_{i=1}^{M})}\{ V(\hat{R}(\theta)) >  \overline{\text{Bin}}(\hat{k}, M, \beta)  \} \leq \beta.
\end{align*}

\section{Applications}
\label{applications}
We present numerical examples to demonstrate that the probabilistic bounds that arise from the holdout method are substantially sharper and require fewer samples than existing methods for data-driven reachability. Specifically, we compare the proposed method to the \textit{wait-and-judge} technique presented in \cite{pmlr-v242-dietrich24a} to illustrate this improvement. Further, we extend scenario-based reachable sets to reachable tubes, an application now amenable to scenario techniques given our improved computational complexity.

\subsection{Reachable Sets}
We calculate reachable sets, using the sum of radial basis functions, posed as a constrained optimization problem  \cite{pmlr-v242-dietrich24a}:
\begin{equation}
\label{eq:rbfs_opt}
\begin{aligned}
& \underset{\mu, \sigma}{\text{minimize}}
& & \sum^m_{i=1} \sigma_i^2\\
& \text{subject to}
& & \sum^m_{i=1} e^{-\frac{1}{2}\frac{(\delta^{(j)}- \mu_i)^2}{\sigma_i^2}}-\gamma \geq 0,& j=1,\dotsc,N,\\
& & & \sigma \in [0,\infty)^m
\end{aligned} 
\end{equation}
Observe that \eqref{eq:rbfs_opt}
corresponds to the program \eqref{eq:volprox}, where 
\begin{equation}
\begin{aligned}
\label{equ:grbf}
g(x,\theta) = \sum^m_{i=1} e^{-\frac{1}{2}\frac{(x - \mu_i)^2}
{\sigma_i^2}} - \gamma
\end{aligned}
\end{equation}
Further, we let Vol$(\hat{R}(\theta)) =\sqrt{\sum_{i=1}^{m} \sigma_i^2}$ be a proxy for the volume of $\hat{R}(\theta)$.
To solve \eqref{eq:rbfs_opt}, we first set the initial centers, $\mu_i$, of the RBFs to be close to optimal using a $k$-means clustering algorithm and choose the initial widths, $\sigma_i$, of our RBFs arbitrarily. We then use the SciPy Python library to solve this optimization problem using Sequential Least Squares Programming. 

We take $\gamma = 0.25$ as the threshold of our RBF and consider 3000 samples. To apply the holdout method, we partition our data set into a training and test set, $N + M = 3000$, and take $\beta = 10^{-9}$. We compute reachable set estimates and $\epsilon$ a-posteriori for various combinations of $N$ and $M$, as seen in Tables \ref{table:duffing-rbf} and \ref{table:quad-rbf}. Additionally, we investigate the difference in volume proxy of our reachable set estimates, to gauge the effect of training set sample complexity. For a comparative baseline, we examine the wait-and-judge technique over $N=3000$ scenarios. When presenting the runtime of the holdout method, this measure includes the calculation of the reachable set over $N$ training samples, test-set generation of $M$ samples, and a-posteriori bound computation. In contrast, runtimes of the wait-and-judge method account for calculation of the reachable set over $N = 3000$ samples and a-posteriori bound computation using support scenarios.

\subsubsection{Duffing Oscillator}
The first example is a reachable set estimation problem for the nonlinear, time-varying system with dynamics: $\ddot{x} =  -\alpha y + x -x^3 + \gamma \cos(\omega t)$, with states $x, y \in \mathbb{R}$ and parameters $\alpha, \gamma, \omega \in \mathbb{R}$. This system is known as the Duffing oscillator, a nonlinear oscillator which exhibits chaotic behavior for certain values of $\alpha, \gamma$ and $\omega$, for instance $\alpha = 0.05, \gamma = 0.4, \omega = 1.3 $. The set of initial states is the interval such that $x(0) \in [0.95, 1.05]$, $y(0) \in [-0.05, 0.05]$, and we take $\mu_{X_0}$ to be the uniform random variable over this interval. The time range is $[t_0, t_1] = [0, 100]$.

\begin{table}[h]\centering
\caption{\normalfont \small Duffing Oscillator: Calculation of $\epsilon$ for the Holdout Method using a Binomial Tail Inversion. 
}
\begin{tabular}{@{}cllccc@{}}
\toprule
Holdout Method:& Training ($N$) & Testing ($M$) & Vol$(\hat{R}(\theta))$ & $\epsilon$  \\ \midrule \midrule
& N = 10 & M = 2990 & 0.57 & 0.776 \\ 
& N = 50 & M = 2950 & 1.27 & 0.147 \\ 
& N = 100 & M = 2900 & 1.44 & 0.089 \\ 
& N = 1000 & M = 2000 & 1.64 & 0.024 \\
& N = 1500 & M = 1500 & 1.54 & 0.018 \\
& N = 2000 & M = 1000 & 1.56 & 0.024 \\ 
& N = 2900 & M = 100 & 1.55 & 0.214 \\ 
& N = 2950 & M = 50 & 1.53 & 0.384 \\
& N = 2990 & M = 10 & 1.53 & 0.874 \\ 
\midrule \midrule
Wait and Judge: & N = 3000 &  & 1.55 & 0.035  \\ 
\bottomrule
\end{tabular}
\label{table:duffing-rbf}
\end{table}
We calculate a reachable set estimate, using two radial basis functions, such that $m = 2$. We apply the holdout method, for combinations of $N$ and $M$, all of which exhibited a runtime of approximately 10-15 sec. When varying the size of the training set, we encounter a significant decrease in volume of our reachable set estimates when the training set has 1000 or less samples. Further, extreme values of $N$, such as $N=10$ or $N=2990$, provide similarly poor probability measures. As seen in Fig. \ref{fig:do-ep}, it is best to balance the size of $N$ and $M$. We observe that $N=1500$, $M=1500$ provides the smallest epsilon, $\epsilon = 0.018$.
In contrast, the wait-and-judge approach exhibited a runtime of approximately 22 min and resulted in $\epsilon = 0.035$, with a volume proxy of Vol$(\hat{R}(\theta)) = 1.55$.   

\begin{figure}[b]
    \centering
    \includegraphics[width=\linewidth]{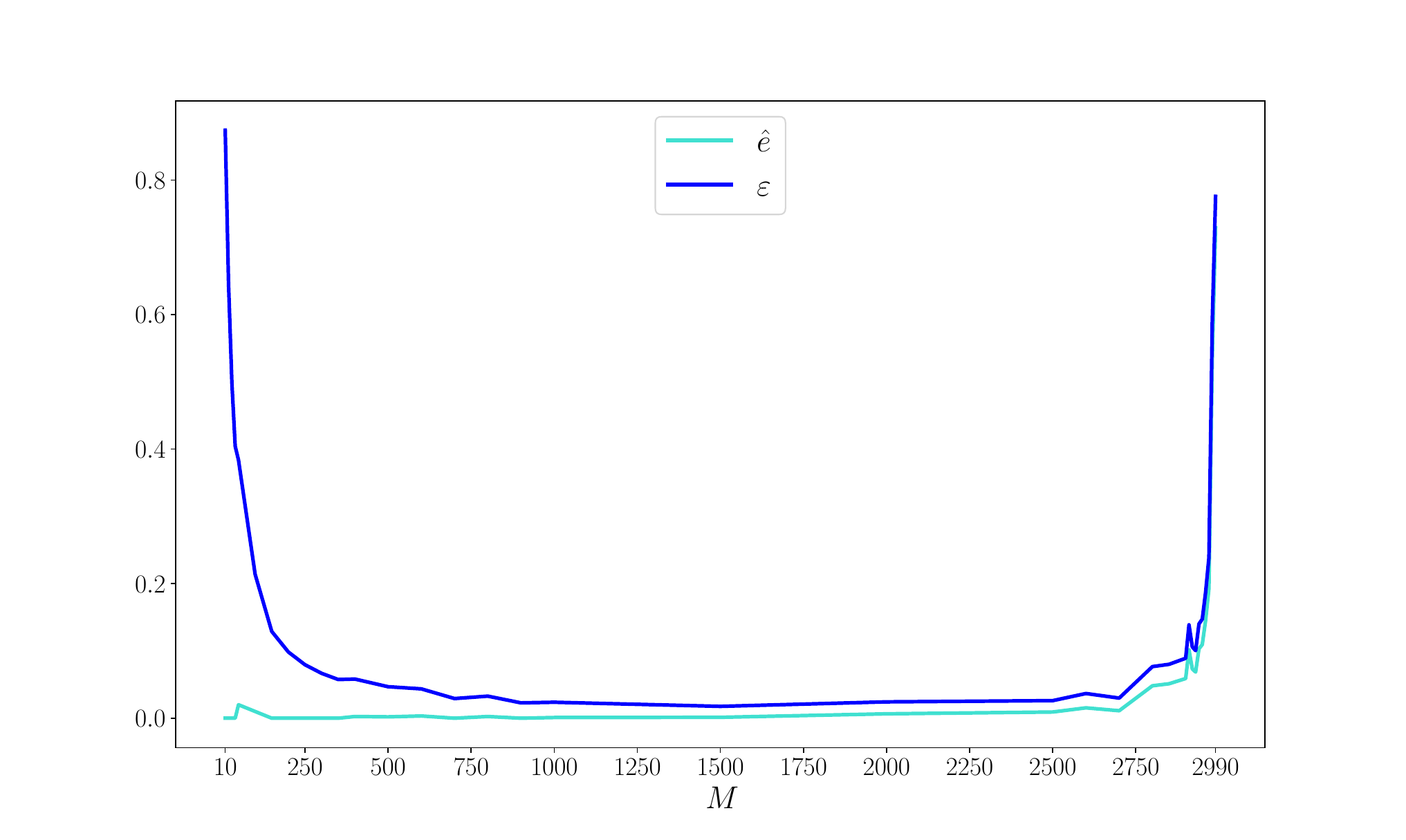}
    \caption{Duffing Oscillator: $\epsilon$ and $\hat{e}$ for various sizes of the holdout dataset. }
    \label{fig:do-ep}
\end{figure}

\subsubsection{Quadrotor}
The next example is for a nonlinear model of a quadrotor used as an example in \cite{10.1145/3302504.3313354, 10.1109/CDC45484.2021.9682860, Bouffard:EECS-2012-241}. The dynamics for this system are
\begin{equation}
    \begin{aligned}
        \ddot{x} &= u_1 K \sin(\theta), \\
        \ddot{h} &= -g + u_1 K \cos(\theta), \\
        \ddot{\theta} &= -d_0 \theta -d_1 \dot{\theta} + n_0 u_2
    \end{aligned}
\end{equation}
where $x$ and $h$ denote the quadrotor's horizontal position and altitude in meters, respectively, and $\theta$ denotes its angular displacement. The system has 6 states, which we take to be $x, h, \theta$, and their first derivatives. The two system inputs $u_1$ and $u_2$ represent the motor thrust and the desired angle, respectively. The parameter values used (following \cite{Bouffard:EECS-2012-241}) are $g = 9.81, K=0.89/1.4, d_0 = 70, d_1=17, n_0=55$. The set of initial states is the interval such that
 \begin{equation}\notag
    \begin{aligned}
        & x(0) \in [-1.7, 1.7],  h(0) \in [0.3, 2.0],  \theta(0) \in [-\pi/12, \pi/12], \\ &\dot{x}(0) \in [-0.8, 0.8],  \dot{h}(0) \in [-1.0, 1.0],   \dot{\theta}(0) \in [-\pi/2, \pi/2],
    \end{aligned}
\end{equation}
the set of inputs is the set of constant functions $u_1(t) = u_1$, $u_2(t) = u_2$ $\forall t \in [t_0, t_1]$, whose values lie in the interval $u_1 \in [-1.5 + g/K, 1.5+g/K], u_2 \in [-\pi/4, \pi/4]$, and we take $\mu_{X_0}$ and $\mu_D$ to be the uniform random variables defined over these intervals. The time range is $[t_0, t_1] = [0,5]$.

We calculate a reachable set estimate using three radial basis functions, such that $m = 3$. We apply the holdout method, for combinations of $N$ and $M$, all of which exhibited a runtime of approximately 35-50 sec. When varying training set size, we observed very similar behavior as that depicted in Fig. \ref{fig:do-ep} for the duffing oscillator. Extreme values of $N$ provide poor probability measures and less than 1000 training samples results in decreased volume of the reachable set estimate. In contrast, the wait-and-judge approach exhibited a runtime of approximately 5.5 hrs and resulted in $\epsilon = 0.051$, with a volume proxy of Vol$(\hat{R}(\theta)) = 27.80$.  

\textbf{Sample Complexity.} In the section above, we utilize equal size datasets to compare the wait-and-judge and holdout method, highlighting the improved accuracy and computational cost of the holdout method. To demonstrate futher improvements in sample complexity, let us examine the duffing oscillator example with sample size $N+M = 2000$ where $N=1000$ and $M=1000$. We apply the holdout method and calculate $\epsilon = 0.0263$, achieving a $1\%$ increase in accuracy with $1000$ fewer samples when compared to the wait-and-judge approach on $3000$ samples.

\begin{table}[H]\centering
\caption{\normalfont \small Quadrotor: Calculation of $\epsilon$ for the Holdout Method using a Binomial Tail Inversion. 
}
\begin{tabular}{@{}cllccc@{}}
\toprule
Holdout Method:& Training ($N$) & Testing ($M$) & Vol$(\hat{R}(\theta))$ & $\epsilon$  \\ \midrule \midrule
& N = 10 & M = 2990 & 11.50 & 0.765  \\
& N = 50 & M = 2950 & 22.97 & 0.280  \\ 
& N = 100 & M = 2900 & 24.29 & 0.120  \\ 
& N = 1000 & M = 2000 & 27.74 & 0.019  \\ 
& N = 1500 & M = 1500 & 27.19 & 0.026 \\ 
& N = 2000 & M = 1000 & 27.85 & 0.031 \\ 
& N = 2900 & M = 100 & 27.81 & 0.187 \\ 
& N = 2950 & M = 50 & 27.30 & 0.384 \\
& N = 2990 & M = 10 & 28.27 & 0.874 \\ 
\midrule \midrule
Wait and Judge: & N = 3000 &  & 27.80 & 0.051  \\ 
\bottomrule
\end{tabular}
\label{table:quad-rbf}
\end{table}

\subsection{Reachable Tubes}
We define a forward estimated reachable tube as $\hat{\mathcal{R}}(t) = \{\Phi(t_i;t_0, x_0, d) : \forall t_i \in [t_0, t], x_0 \in X_0, d \in D\}$ where $[t_0, t]$ is a finite time interval, $X_0 \subseteq \mathbb{R}^{n_x}$ is the set of initial states, $D$ is the set of disturbance signals $d: [t_0, t] \rightarrow \mathbb{R}^{n_d}$, and $\Phi : \mathbb{R} \times X_0 \times D \rightarrow \mathbb{R}^{n_x}$ is the state transition function. $\hat{\mathcal{R}}(t)$ is the set of all states to which the system can transition within a duration of time $[t_0, t]$ for finite time steps $t_i$ from $X_0$ subject to disturbances in $D$. The reachable tube accounts for all time steps in interval $[t_0, t]$, while the reachable sets presented in Section \ref{prelim} only consider one specific time instant. We generalize the nonconvex scenario reachability method presented above and pose the problem of finding a reachable tube directly from data as a chance-constrained optimization problem with regularization over time. In particular, we propose an approach in which we construct a smoothed, finite set of time-varying RBFs: 
\begin{equation}
\notag
\begin{aligned}
&\underset{\mu, \sigma}{\text{minimize}} 
& &  \sum^t_{\tau=0} \sum^m_{i=1} \sigma_i(\tau)^2 + \lambda \| \sigma_{avg} - \sigma_i(\tau)\|^2 \\
& \text{subject to}
& & \sum^t_{\tau=0} \sum^m_{i=1} e^{-\frac{1}{2}\frac{(\delta^{(j)}(\tau)- \mu_i(\tau))^2}{\sigma_i(\tau)^2}}-\gamma \geq 0, j=1,\dotsc,N,\\
& & & \sigma(\tau) \in [0,\infty)^m.
\end{aligned} 
\end{equation}
where $\mu$ is the center of an RBF, $\sigma$ is the width of an RBF, $\lambda$ and $\gamma$ are user-specified parameters, $\delta$ is a scenario, and $\sigma_{avg}$ is the average width over all $\sigma_i$.

To illustrate scenario-based reachable tubes, we investigate a simple linear system: $\dot{x} = Ax$ with 
\begin{align}
A = 
\begin{bmatrix}
-0.7 & -1.0 \\
1.0 & -0.7
\end{bmatrix}.
\end{align}
The set of initial states is the interval such that $x_1(0), x_2(0) \in [1, 1.25]$, and we take $\mu_{X_0}$ to be the uniform random variable over this interval. The time range is $[t_0, t] = [0, 10]$. We calculate the reachable tube estimate using one radial basis function for every time instant. To apply the holdout method, we partition our data set of trajectories into a training and test set, as was done in the previous section, and take $\beta = 10^{-9}$. We consider a boundary violation to be any trajectory that falls outside the reachable tube at any time instant.
\begin{figure}
    \centering
    \includegraphics[width=\linewidth]{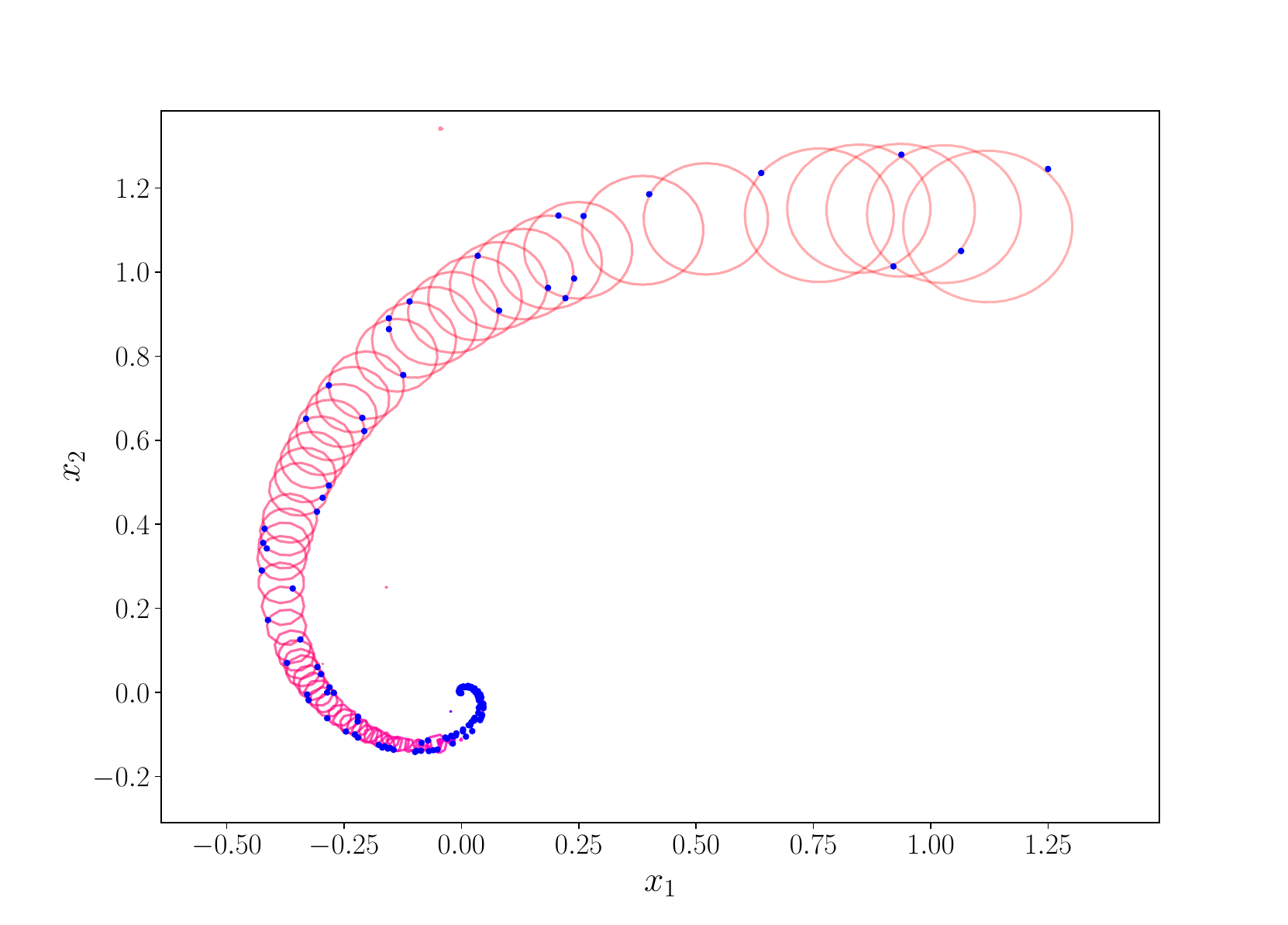}
    \caption{Reachable tube of linear system: The progression of time can be seen as the color of the reachable sets gets darker. The small blue dots indicate the 138 boundary violations that arose from the holdout method. }
    \label{fig:reachtube}
\end{figure}
We present one example, as can be seen in Fig. \ref{fig:reachtube}, given $N=1500$ and $M=1500$. We observed 138 boundary violations, resulting in $\epsilon=0.144$ and a runtime of 4.92 min. Note that by adjusting $\lambda$, we could influence the widths of our RBFs, e.g. increase the size, thereby decreasing the number of violations at later time instances. Further, we do not provide comparisons with the wait-and-judge approach due to its computational cost, as it requires recalculation of the reachable tube upon removal of every individual trajectory.

\section{De-randomization}

In this work, we have advocated for the use of a general purpose, practical, and sharp method for data-driven reachability: the holdout method.
Data-driven reachability offers probabilistic guarantees that are comprised of two ``layers'' of probability (cf.~\Cref{thm:holdout}): the bound on the violation probability, and the bound in probability, with respect to the probability law of the sample generation (in other words, the confidence that the samples have yielded enough information to construct a reachable set estimate for which the violation probability holds).
These types of Probably Approximately Correct (PAC) bounds~\cite{valiant1984pac} are ubiquitous in statistical learning theory and data-driven approaches to probabilistic verification. They are generally recognized as the strongest type of guarantee that can be made while placing minimal assumptions on the random variable verified. 
Nevertheless, it is reasonable to critize two layers of probability on the grounds of interpretability and soundness, compared to conventional control-theoretic guarantees, which provide 100\% certainty given accurate modeling information. To that end, several works have made attempts to ``de-randomize'' data-driven methods, augmenting them with additional information to remove one, or both, of the probabilistic layers~\cite{lecchini2010optimization,esfahani2014performance,pmlr-v155-boffi21a}.

In this section we argue that, while de-randomization aligns with control theory’s pursuit of certainty, it is not practical for the type of data-driven method explored in this paper and similar approaches.
Data-driven verification has two primary advantages:
mild requirements on side information, often requiring none at all, and 
computational efficiency compared to deterministic, sampling-based methods. 
We have found that de-randomized data-driven methods often relinquish both advantages.

\subsection{De-randomization Methods}
\label{sec:derandomization_methods}
The removal of either layer of probability (or both) is carried out by selecting, according to some side information, a suitable enlargement of the data-driven estimator. For simplicity,
we first focus on de-randomizing the inner probability layer, and return to the outer layer later. Specifically, suppose we have found a feasible point $\theta^*$ satisfying the chance constraint $\mathbf{P}\{ g(\theta^*,x) \le 0 \} \ge 1-\epsilon$. To de-randomize such a bound is to establish some $\gamma \ge 0$, as a function of information not available to the original data-driven algorithm, such that $g(\theta^*, x) \le \gamma$ holds almost surely.

A general scheme for how to perform a de-randomization of this type is laid out in~\cite{esfahani2014performance}. The central idea of the scheme is computing a uniform level-set bound function $h(\epsilon)$ that acts as a uniform upper bound (over $\theta$) on the quantile function of the random variable $(\sup_x g(x,\theta) - g(x,\theta))$. As a consequence, it follows that if $\mathbf{P}\{ g(\theta^*,x) \le 0 \} \ge 1-\epsilon$, then 
$\mathbf{P}\{ g(\theta^*,x) \le h(\epsilon) \} =1$.
In our case, the function $g$ contains information about the dynamical system specified in the reachability problem,
so de-randomization requires some amount of new side information. 


A standard assumption 
is to presume knowledge of a global Lipschitz bound on the dynamics and to take the estimator sets to be $p$-norm balls, or ellipsoids. In this case, the magnitude of the enlargement factor from existing work scales exponentially with state dimension \cite[cf. Remark 3.9]{esfahani2014performance}, \cite[cf. Theorem 1]{lecchini2010optimization}.
Interestingly, a less conservative estimate of the reachable set with the same number of queries---and moreover with a non-stochastic guarantee---is possible with a ``sample and cover'' approach. This approach spreads query points uniformly over the initial set and uses a contraction-based approximation to completely cover a region around each query point.
Given this context, 
it is natural to wonder if the situation can be improved with a sharper analysis, or if there is a fundamental limitation with this type of de-randomization in general.

\subsection{Lower Bounds from Zeroth-Order Optimization}
The exponential scaling mentioned previously is inherent to all de-randomization approaches --- ensuring zero violation probability using only samples and a Lipschitz condition.
To demonstrate this,
suppose we are given an $h : B_2^d \mapsto \mathbb{R}$,
where $B_2^d(r) := \{ x \in \mathbb{R}^d \mid \|x\|_2 \leq r\}$ and $B_2^d := B_2^d(1)$,
and we are assured that (a) $h$ is $L$-Lipschitz and
(b) the following probabilistic guarantee holds:
\begin{align*}
    \mathbf{P}_{x \sim \mu(B_2^d)}\{ h(x) > 0 \} \leq \varepsilon, \quad \mu(B_2^d) := \mathrm{Unif}(B_2^d). \label{eq:h_prob_guarantee}
\end{align*}
\begin{lemma}
\label{lemma:lower_bound}
There exists an $h : B_2^d \mapsto \mathbb{R}$ satisfying conditions (a) and (b) above, such that $\max_{x \in B_2^d} h(x) = L \varepsilon^{1/d}$.
\end{lemma}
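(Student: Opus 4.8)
\emph{Proof plan.} The statement is purely existential, so the plan is to exhibit a single explicit $h$ and verify the three requirements directly, with no estimation involved. The candidate I would use is the radial cone centered at the origin,
\[
h(x) = L\bigl(\varepsilon^{1/d} - \|x\|_2\bigr).
\]
The intuition guiding this choice is that, to keep the enlargement $\max_x h$ as small as possible while retaining a positive superlevel set of a prescribed measure, $h$ should descend from its peak at the fastest rate its Lipschitz budget permits in every direction; the resulting positive region is then a Euclidean ball whose measure can be tuned precisely through its radius.

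First I would verify (a): since $x \mapsto \|x\|_2$ is $1$-Lipschitz, $h$ is $L$-Lipschitz, so (a) holds with Lipschitz constant exactly $L$. Next, for (b), I would identify the violation region as $\{x : h(x) > 0\} = \{x : \|x\|_2 < \varepsilon^{1/d}\} = B_2^d(\varepsilon^{1/d})$; since $0 < \varepsilon \le 1$ we have $\varepsilon^{1/d} \le 1$, so this ball sits inside $B_2^d$. Invoking the scaling of Lebesgue measure under dilation, $\mathrm{vol}(B_2^d(r)) = r^d\,\mathrm{vol}(B_2^d)$, the uniform measure of the violation region is exactly $(\varepsilon^{1/d})^d = \varepsilon$, so $\mathbf{P}_{x \sim \mathrm{Unif}(B_2^d)}\{h(x) > 0\} = \varepsilon$ and (b) holds (with equality). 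Finally, because $h$ is strictly decreasing in $\|x\|_2$, its maximum over $B_2^d$ is attained at $x = 0$, giving $\max_{x \in B_2^d} h(x) = h(0) = L\varepsilon^{1/d}$, as claimed.

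There is no genuine analytic difficulty here; the single piece of real content is the dimensional volume-scaling identity $\mathrm{vol}(B_2^d(r)) = r^d\,\mathrm{vol}(B_2^d)$, which is precisely the step converting a linear (in the radius) Lipschitz slope into the exponential dependence $\varepsilon^{1/d}$ on the dimension---exactly the phenomenon the lemma is meant to expose. If I wanted to argue that this behavior is intrinsic rather than an artifact of one clever choice of $h$, I would append a short extremality remark: among $L$-Lipschitz functions with a fixed peak value at a given point, the cone above has the largest possible positive superlevel set, so no $L$-Lipschitz alternative can simultaneously achieve a smaller peak and the same violation probability. For the stated existence claim, however, the three verifications above suffice.
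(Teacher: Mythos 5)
Your proof is correct and follows essentially the same route as the paper: the paper defines the family $h_\delta(x) = \delta - L\|x\|_2$ and tunes $\delta$ via the volume-scaling identity $\mathrm{Vol}(B_2^d(r)) = r^d\,\mathrm{Vol}(B_2^d(1))$ to arrive at $\delta_\star = L\varepsilon^{1/d}$, which is exactly the cone you wrote down directly. The only difference is presentational (you substitute the optimal $\delta$ up front rather than solving for it), and your added extremality remark is a nice but optional bonus.
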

\begin{proof}
The construction is reminiscent of the Lipschitz bump construction used to prove lower bounds for zeroth-order optimization~\cite[cf. Theorem 1.1.2]{nesterov2018lectures}.
Consider a family of functions $h_\delta(x) := \delta - L \| x \|_2$ for $\delta > 0$.
We now tune $\delta$ so that we ensure
$\mathbf{P}_{x \sim \mu(B_2^d)}\{ h_\delta(x) > 0 \} = \varepsilon$.  Observing that
$$
    \{ x \mid h_\delta(x) > 0 \} = \{ x \mid \delta/ L > \| x \|_2 \} =_{\mathrm{a.e.}} \{ x \mid x \in B_2^d(\delta/L) \},
$$
and using $\mathrm{Vol}(B_2^d(r)) = r^d \mathrm{Vol}(B_2^d(1))$, we have
\begin{align*}
    \mathbf{P}_{x \sim \mu(B_2^d)}\{ h_\delta(x) > 0 \} &= \mathbf{P}_{x \sim \mu(B_2^d)}\{ x \in B_2^d(\delta/L) \} \\
    &= \frac{\mathrm{Vol}(B_2^d(\delta/L))}{\mathrm{Vol}(B_2^d(1))} = (\delta/L)^d.
\end{align*}
Setting $(\delta/L)^d = \varepsilon$ yields $\delta = \delta_\star := L \varepsilon^{1/d}$,
and hence $\max_{x \in B_2^d} h_{\delta_\star}(x) = h_{\delta_\star}(0) = L \varepsilon^{1/d}$.
\end{proof}

The implication of \Cref{lemma:lower_bound} is as follows.
Suppose that the $\varepsilon$ from \Cref{lemma:lower_bound} decreases at a rate of $1/M$, where $M$ is the number of holdout examples, 
one needs at least $(L/\gamma)^d$ samples to ensure a bound $\max_{x \in B_2^d} h(x) \leq \gamma$
in this setting.

Furthermore, from the simple ``sample and cover'' procedure described in \Cref{sec:derandomization_methods}, we see that the same conditions on side information that yield a one-level de-randomization also suffices to de-randomize \emph{both} levels. The basic approach is to remove the possibility of an uninformative sample by selecting the query points non-stochastically, and setting the enlargement factor according to the given Lipschitz bound and the largest distance between any of the query points.
This is equivalent to the well-studied problem of \emph{zeroth-order optimization}, for which well-known lower bounds 
\cite[e.g., Theorem 1.1.2]{nesterov2018lectures}
state that
$(L/\gamma)^d$ queries are needed for any algorithm to obtain the desired $\gamma$-sub-optimality guarantee.\footnote{Nesterov's bound in the original form holds for $\ell_\infty$-Lipschitz (i.e. $| h(x) - h(y) | \leq L \| x - y \|_\infty$) instead of 
$\ell_2$-Lipschitz (i.e., $| h(x) - h(y) | \leq L \|x - y \|_2$), but the proof can be modified for any $\ell_p$-Lipschitz assumption.} Note that this bound precisely matches
the minimum sample complexity prescribed by \Cref{lemma:lower_bound}.

The foregoing demonstrations yield three key takeaways: (a)  De-randomization under standard Lipschitz assumptions necessarily calls for an amount of data that scales \emph{exponentially} with respect to the dimension of the state. (b) The degree of side information required to de-randomize a bound (e.g., Lipschitz constants) is often rich enough to obtain conventional, non-stochastic reachable set over-approximations. (c) There is little room for a ``middle ground'' in de-randomization: the query complexity and side information required to remove one level of probability
is essentially the same as required to remove both. 

In contrast, data-driven methods that yield PAC bounds require almost no side information and are quite sample efficient, as demonstrated in this work. Since the costs incurred by de-randomization are large relative to this baseline, we argue that whenever a sampling-based approach is deemed acceptable, the circumstances where de-randomization is worth the price are rare.





\section{Conclusion}
In this paper, we demonstrate that the holdout method can significantly decrease the sample complexity in finding probabilistically tight reachable sets; this method is highly efficient when collecting scenarios is computationally cheap. Furthermore, we complement our work with a discussion on the necessity of probabilistic reachability bounds within the context of data-driven analysis.

\section{Acknowledgments}

This paper is supported in part by 
the NSF 
project CNS-2111688. The first author was also supported by an NSF Graduate Research Fellowship. 

\bibliography{ref}

\end{document}